\documentclass{article}
\usepackage{graphicx} 
\usepackage{paithan}
\usepackage{amssymb}
\usepackage{xfrac}

\usepackage{mathrsfs}
\usepackage{mathtools}
\usepackage{amsfonts}
\usepackage{amsthm}

\newtheorem{theorem}{Theorem}[section]
\newtheorem{lemma}[theorem]{Lemma}

\newtheorem{corollary}[theorem]{Corollary}

\theoremstyle{definition}
\newtheorem{definition}[theorem]{Definition}
\newtheorem{problem}[theorem]{Open problem}

\usepackage{xspace}
\usepackage{hyperref}
\usepackage{verbatim}
\usepackage{tikz}
\usetikzlibrary{positioning}

\title{It's Hard to PArcK}
\author{Kyle Burke $^1$, Jeffrey Leman $^1$, Craig Tennenhouse $^2$}
\date{$^1$Florida Southern College, Lakeland, FL 33801, USA\\
$^2$University of New England, Biddeford, ME 04005, USA}

\newcommand{\parck}{\ruleset{PArcK}}
\newcommand{\poscnf}{\ruleset{Positive CNF}}

\begin{document}

\maketitle

\begin{abstract}
    We show that \ruleset{Partizan Arc Kayles} (\ruleset{PArcK}), a generalization of \ruleset{Domineering} to graphs, is \cclass{PSPACE}-complete via a reduction from \poscnf{} and with recently-discovered techniques for creating \parck{} positions with high temperature.  The reduction uses only red and blue edges.
\end{abstract}

\section{Introduction}

Combinatorial Game Theory (CGT) is the study of two-player games with no randomness and no hidden information where players alternate turns.  Under the Normal Play convention, a player loses when there is no move available at the beginning of their turn.  (The last player to move wins.)  \ruleset{Partizan Arc Kayles} (\parck{}) is one such ruleset.  The default player names are Left and Right, but there are conventional other names depending on the game components.

\begin{definition}
    \ruleset{Partizan Arc Kayles} is played on a simple graph where each edge is colored blue, red, or green.  The game is played between Left, who plays on blue edges, and Right, who plays on red.  A turn consists of picking an edge of your color (or green) and removing that edge, both incident points, and any other edges incident to those two points.\footnote{A playable version of the game is available at \url{https://kyleburke.info/DB/combGames/partizanArcKayles.html}.}
\end{definition}

\parck{} is a generalization of the game \ruleset{Domineering}, a Normal Play game played by two players who place dominoes on a grid with each covering two spaces.  Left places dominoes vertically and Right places them horizontally.  Dominoes cannot be placed over other dominoes. \ruleset{Domineering} has been a popular subject of study and can be realized as a special case of \parck{} on a grid where the vertical edges are blue and horizontal edges are red. Thus, the more we learn about \parck{} the better we hope to understand \ruleset{Domineering}.

Normal Play CGT yields many values which can simplify analysis when a game naturally breaks into multiple disjoint pieces and on their turn a player can choose to play on any one of them.  These values can be added together using the disjunctive sum.  Of specific interest in this work are:

\begin{itemize}
    \item $0$, which is a losing position for whichever player goes first.  All positions where neither player has a winning strategy going first are equal to zero.  For example, the position with no edges has value zero because whoever goes first immediately loses.
    \item Positive numbers, which are winning positions for Left, no matter who goes first.  If a position, $G$, has no moves for Right, and Left can take $k$ moves to reach $0$, then $G = k$.
    \item Negative numbers, which are winning positions for Right, no matter who goes first.  For $k>0$, $-k$ is analagous to $k$ but with $k$ moves for Right instead of Left.
    \item A position where all best moves for both players are to zero is known as $*$.
    \item The position $\pm k$ is shorthand for $\{k|-k\}$.
    \item For a game value $g$, the value $g+*$ is abbreviated $g*$.
\end{itemize}

Many other analysis techniques and uses of terminology appear in this paper that are common to CGT.  Interested readers are recommended to check out the excellent CGT textbooks of \cite{WinningWays:2001, LessonsInPlay:2007, SiegelCGT:2013}, especially for details used in the analysis of section \ref{sec:values}.

Algorithmic Combinatorial Game Theory (ACGT), coined in \cite{AlgGameTheoryGONC3}, is the application of algorithmic concepts in the study of combinatorial games.  Results in this area generally show algorithms to solve positions in a specific ruleset in polynomial time or show intractibility of determining which player has a winning strategy on positions of a ruleset.  Computational complexity is detailed in many textbooks such as \cite{PapadimitriouBook:1994}.  This paper proves an intractibility result on \parck{}, which is an improvment on the \cclass{NP}-hardness result of \cite{HANAKA2026103716}.  Recently, the hardness of \parck{} under mis\`ere play was shown in \cite{2511.21888}.

\section{Values of Positions}
\label{sec:values}

In this section, we describe the values of some \parck{} positions that we will use throughout this paper.  

We start with three positions with basic values (birthdays of $1$): $1$, $-1$, and $*$, shown in figure \ref{fig:birthdayOne}.

\begin{figure}[h!]
\begin{center}\begin{tikzpicture}[scale=1]
    \node[draw, circle] (blue1) {} ;
    \node[draw, circle] (blue2) [right of=blue1] {} ;
    \node[draw, circle] (red1) [right of=blue2] {};
    \node[draw, circle] (red2) [right of=red1] {};
    \node[draw, circle] (star1) [right of=red2] {};
    \node[draw, circle] (star2) [right of=star1] {};
    \node[draw, circle] (star3) [right of=star2] {};

    \path[-]
        (red1) edge [color=red] (red2)
        (star2) edge [color=red] (star3)
        (blue1) edge [color=blue] (blue2)
        (star1) edge [color=blue] (star2)
    ;
\end{tikzpicture} \end{center}
\caption{Three basic positions.  From left to right: $1$, $-1$, and $*$.}
\label{fig:birthdayOne}
\end{figure}

For our variable gadgets, we will use $-A_2$, shown in figure \ref{fig:nA2} and discovered as part of a more general recursive construction by \cite{craigNeilSvenja:2026} as part of a family of high temperature positions.  On its own, $-A_2$ has value $0$, but if a different edge that includes $m$ is chosen by a player, the two red edges disappear and the value is 2. 

\begin{figure}[h!]
\begin{center}\begin{tikzpicture}[scale=1]
    \node[draw, circle] (m) {$m$} ;
    \node[draw, circle] (a3) [left of=m] {};
    \node[draw, circle] (a4) [right of=m] {};
    \node[draw, circle] (b3) [left of=a3] {};
    \node[draw, circle] (b4) [right of=a4] {};

    \path[-]
        (m) edge [color=red] (a3)
        (m) edge [color=red] (a4)
        (a3) edge [color=blue] (b3)
        (a4) edge [color=blue] (b4)
    ;
\end{tikzpicture} \end{center}
\caption{$-A_2$.  $m$ is the vertex that connects to other parts of the graph.}
\label{fig:nA2}
\end{figure}

We will also use the 3-edge path shown in figure \ref{fig:3path}.

\begin{figure}[h!]
\begin{center}\begin{tikzpicture}[scale=1]
    \node[draw, circle] (y) {} ;
    \node[draw, circle] (x) [right of=y] {};
    \node[draw, circle] (w) [right of=x] {};
    \node[draw, circle] (v) [right of=w] {};

    \path[-]
        (y) edge [color=red] node[left] {} (x)
        (x) edge [color=blue] node[below] {} (w)
        (w) edge [color=red] node[right] {} (v)
    ;
\end{tikzpicture} \end{center}
\caption{The value of this position is $\gameSet{0}{-1} = -\sfrac{1}{2} \pm \sfrac{1}{2}$.}
\label{fig:3path}
\end{figure}

Adding a blue edge to the path in figure \ref{fig:3path} to give it another tail as in figure \ref{fig:2pathFork} doesn't change the value.

\begin{figure}[h!]
\begin{center}\begin{tikzpicture}[scale=1]
    \node[draw, circle] (y) {} ;
    \node[draw, circle] (z) [above of=y] {};
    \node[draw, circle] (x) [right of=y] {};
    \node[draw, circle] (w) [right of=x] {};
    \node[draw, circle] (v) [right of=w] {};

    \path[-]
        (y) edge [color=red] node[left] {} (x)
        (z) edge [color=blue] (x)
        (x) edge [color=blue] node[below] {} (w)
        (w) edge [color=red] node[right] {} (v)
    ;
\end{tikzpicture} \end{center}
\caption{The value of this position is also $\gameSet{0, -1}{-1, *} = \gameSet{0}{-1} = -\sfrac{1}{2} \pm \sfrac{1}{2}$.}
\label{fig:2pathFork}
\end{figure}

Another 4-edge path we will use later on is shown in figure \ref{fig:4path}.  Here Left's only option is to $-1$ and Right has moves to $-1$ and $*$.  

\begin{figure}[h!]
\begin{center}\begin{tikzpicture}[scale=1]
    \node[draw, circle] (y) {} ;
    \node[draw, circle] (x) [right of=y] {};
    \node[draw, circle] (w) [right of=x] {};
    \node[draw, circle] (v) [right of=w] {};
    \node[draw, circle] (u) [right of=v] {};

    \path[-]
        (y) edge [color=red] node[left] {} (x)
        (x) edge [color=blue] node[below] {} (w)
        (w) edge [color=red] node[right] {} (v)
        (v) edge [color=red] node[right] {} (u)
    ;
\end{tikzpicture} \end{center}
\caption{The value of this position is $\gameSet{-1}{-1} = -1*$.}
\label{fig:4path}
\end{figure}

Adding a blue edge to give the path graph in figure \ref{fig:4path} an extra tail, as in figure \ref{fig:3pathFork}, doesn't change the value.

\begin{figure}[h!]
\begin{center}\begin{tikzpicture}[scale=1]
    \node[draw, circle] (y) {} ;
    \node[draw, circle] (z) [above of=y] {};
    \node[draw, circle] (x) [right of=y] {};
    \node[draw, circle] (w) [right of=x] {};
    \node[draw, circle] (v) [right of=w] {};
    \node[draw, circle] (u) [right of=v] {};

    \path[-]
        (z) edge [color=blue] (x)
        (y) edge [color=red] node[left] {} (x)
        (x) edge [color=blue] node[below] {} (w)
        (w) edge [color=red] node[right] {} (v)
        (v) edge [color=red] node[right] {} (u)
    ;
\end{tikzpicture} \end{center}
\caption{The value of this position is also $-1*$.}
\label{fig:3pathFork}
\end{figure}

We will also use the 6-edge path shown in figure \ref{fig:6path}, which has value $-1$.  We've hit a point where it's helpful to show the derivation of these values.  For this one, the value is $\gameSet{-\sfrac{1}{2} \pm \sfrac{1}{2}}{-1*, -1*} = \gameSet{-\sfrac{1}{2} \pm \sfrac{1}{2}}{-1*} = -1$.

\begin{figure}[h!]
\begin{center}\begin{tikzpicture}[scale=1]
    \node[draw, circle] (y) {} ;
    \node[draw, circle] (x) [right of=y] {};
    \node[draw, circle] (w) [right of=x] {};
    \node[draw, circle] (v) [right of=w] {};
    \node[draw, circle] (u) [right of=v] {};
    \node[draw, circle] (t) [right of=u] {};
    \node[draw, circle] (s) [right of=t] {};

    \path[-]
        (y) edge [color=red] node[left] {} (x)
        (x) edge [color=blue] node[below] {} (w)
        (w) edge [color=red] node[right] {} (v)
        (v) edge [color=red] node[right] {} (u)
        (u) edge [color=blue] node[below] {} (t)
        (t) edge [color=red] node[right] {} (s)
    ;
\end{tikzpicture} \end{center}
\caption{The value of this position is $-1$.}
\label{fig:6path}
\end{figure}

Adding a single blue edge to the graph in figure \ref{fig:6path} to give it another tail as in figure \ref{fig:5pathFork} also doesn't change the value there as well.  The derivation is $\gameSet{-1*, -\sfrac{1}{2} \pm \sfrac{1}{2}, -\sfrac{1}{2} \pm \sfrac{1}{2}}{-1*, 0,  -1*, -1*}$ $=$\\ $\gameSet{-1*, -\sfrac{1}{2} \pm \sfrac{1}{2}}{-1*} = -1$.

\begin{figure}[h!]
\begin{center}\begin{tikzpicture}[scale=1]
    \node[draw, circle] (y) {} ;
    \node[draw, circle] (z) [above of=y] {};
    \node[draw, circle] (x) [right of=y] {};
    \node[draw, circle] (w) [right of=x] {};
    \node[draw, circle] (v) [right of=w] {};
    \node[draw, circle] (u) [right of=v] {};
    \node[draw, circle] (t) [right of=u] {};
    \node[draw, circle] (s) [right of=t] {};

    \path[-]
        (y) edge [color=red] node[left] {} (x)
        (z) edge [color=blue] (x)
        (x) edge [color=blue] node[below] {} (w)
        (w) edge [color=red] node[right] {} (v)
        (v) edge [color=red] node[right] {} (u)
        (u) edge [color=blue] node[below] {} (t)
        (t) edge [color=red] node[right] {} (s)
    ;
\end{tikzpicture} \end{center}
\caption{The value of this position is also $-1$.}
\label{fig:5pathFork}
\end{figure}

Even if we add an edge to the \emph{other} end of  \ref{fig:5pathFork} to fork the tail on the opposite side, as in figure \ref{fig:4path2fork}, the value remains $-1$, because \\ $\gameSet{-1*, -\sfrac{1}{2} \pm \sfrac{1}{2}}{-1*, 0} = -1 $.

\begin{figure}[h!]
\begin{center}\begin{tikzpicture}[scale=1]
    \node[draw, circle] (y) {} ;
    \node[draw, circle] (z) [above of=y] {};
    \node[draw, circle] (x) [right of=y] {};
    \node[draw, circle] (w) [right of=x] {};
    \node[draw, circle] (v) [right of=w] {};
    \node[draw, circle] (u) [right of=v] {};
    \node[draw, circle] (t) [right of=u] {};
    \node[draw, circle] (s) [right of=t] {};
    \node[draw, circle] (s2) [above of=s] {};

    \path[-]
        (y) edge [color=red] node[left] {} (x)
        (z) edge [color=blue] (x)
        (x) edge [color=blue] node[below] {} (w)
        (w) edge [color=red] node[right] {} (v)
        (v) edge [color=red] node[right] {} (u)
        (u) edge [color=blue] node[below] {} (t)
        (t) edge [color=red] node[right] {} (s)
        (t) edge [color=blue] (s2)
    ;
\end{tikzpicture} \end{center}
\caption{The value of this position is also $-1$.}
\label{fig:4path2fork}
\end{figure}

We will also use a reverse version of $-A_2$ with additional edges, shown in figure \ref{fig:a2plus}, which also has a value of $0$.  The labeling of these gadgets in the figures is added to match up with the clause gadgets in figure \ref{fig:clause} in the next section.

\begin{figure}[h!]
\begin{center}\begin{tikzpicture}[scale=1]
    \node[draw, circle] (y) {} ;
    \node[draw, circle] (x1) at (0, 1) {};
    \node[draw, circle] (aEnd) [left of=y] {};
    \node[draw, circle] (bc) [below right of=y] {};
    \node[draw, circle] (cd) [below left of=bc] {};
    \node[draw, circle] (dEnd) [left of=cd] {};

    \path[-]
        (y) edge [color=blue] node[left] {$xy$} (x1)
        (y) edge [color=red] node[below] {$a$} (aEnd)
        (y) edge [color=blue] node[above] {$b$} (bc)
        (bc) edge [color=blue] node[below] {$c$} (cd)
        (cd) edge [color=red] node[below] {$d$} (dEnd)
    ;
\end{tikzpicture} 
\hspace{1cm}
\begin{tikzpicture}[scale=1]
    \node[draw, circle] (y) {} ;
    \node[draw, circle] (x1) at (-1, 1) {};
    \node[draw, circle] (x2) at (1, 1) {};
    \node[draw, circle] (aEnd) [left of=y] {};
    \node[draw, circle] (bc) [below right of=y] {};
    \node[draw, circle] (cd) [below left of=bc] {};
    \node[draw, circle] (dEnd) [left of=cd] {};

    \path[-]
        (y) edge [color=blue] node[left] {$xy_1$} (x1)
        (y) edge [color=blue] node[left] {$xy_2$} (x2)
        (y) edge [color=red] node[below] {$a$} (aEnd)
        (y) edge [color=blue] node[above] {$b$} (bc)
        (bc) edge [color=blue] node[below] {$c$} (cd)
        (cd) edge [color=red] node[below] {$d$} (dEnd)
    ;
\end{tikzpicture} 
\hspace{1cm}
\begin{tikzpicture}[scale=1]
    \node[draw, circle] (y) {} ;
    \node[draw, circle] (x1) at (-2, 1) {};
    \node[draw, circle] (x2) at (0, 1) {};
    \node[draw, circle] (x3) at (1, 1) {};
    \node[draw, circle] (aEnd) [left of=y] {};
    \node[draw, circle] (bc) [below right of=y] {};
    \node[draw, circle] (cd) [below left of=bc] {};
    \node[draw, circle] (dEnd) [left of=cd] {};

    \path[-]
        (y) edge [color=blue] node[left] {$xy_1$} (x1)
        (y) edge [color=blue] node[left] {$xy_2$} (x2)
        (y) edge [color=blue] node[below right] {$xy_3$} (x3)
        (y) edge [color=red] node[below] {$a$} (aEnd)
        (y) edge [color=blue] node[right] {$b$} (bc)
        (bc) edge [color=blue] node[below] {$c$} (cd)
        (cd) edge [color=red] node[below] {$d$} (dEnd)
    ;
\end{tikzpicture} \end{center}
\caption{The value of each of these positions is $0$.  Adding more single edges $xy$ does not change the value, because playing on one of those edges is equivalent to playing on all of them.  The form for all of them is $\gameSet{*, -1, *}{*, *} = 0$.}
\label{fig:a2plus}
\end{figure}

Combining $b$ and $c$ as one edge gives the graph shown in figure \ref{fig:FPiece}

\begin{figure}[h!]
\begin{center}\begin{tikzpicture}[scale=1]
    \node[draw, circle] (y) {} ;
    \node[draw, circle] (x1) at (0, 1) {};
    \node[draw, circle] (aEnd) [left of=y] {};
    \node[draw, circle] (bc) [below of=y] {};
    \node[draw, circle] (dEnd) [left of=bc] {};

    \path[-]
        (y) edge [color=blue] node[left] {$xy$} (x1)
        (y) edge [color=red] node[below] {$a$} (aEnd)
        (y) edge [color=blue] node[right] {$b$} (bc)
        (bc) edge [color=red] node[below] {$d$} (dEnd)
    ;
\end{tikzpicture}  \end{center}
\caption{The value of this position is $\gameSet{0, -1}{*, -1} = \gameSet{0}{-1} = -\sfrac{1}{2} \pm \sfrac{1}{2}$.}
\label{fig:FPiece}
\end{figure}

Instead of merging two edges, we can modify $-A_2$ another way by adding a red edge to the middle, as in figure \ref{fig:a2plusMid}, which has a value of $-\sfrac{3}{2} \pm \sfrac{1}{2}$.

\begin{figure}[h!]
\begin{center}\begin{tikzpicture}[scale=1]
    \node[draw, circle] (y) {} ;
    \node[draw, circle] (aEnd) [left of=y] {};
    \node[draw, circle] (bc) [below right of=y] {};
    \node[draw, circle] (cd) [below left of=bc] {};
    \node[draw, circle] (dEnd) [left of=cd] {};
    \node[draw, circle] (s) [right of=bc] {};

    \path[-]
        (y) edge [color=red] node[below] {$a$} (aEnd)
        (y) edge [color=blue] node[above] {$b$} (bc)
        (bc) edge [color=blue] node[below] {$c$} (cd)
        (cd) edge [color=red] node[below] {$d$} (dEnd)
        (bc) edge [color=red] node[below] {} (s)
    ;
\end{tikzpicture} \end{center}
\caption{The value of this position is $\gameSet{-1, -1}{-\sfrac{1}{2}\pm\sfrac{1}{2}, -2,-\sfrac{1}{2}\pm\sfrac{1}{2}} = \gameSet{-1}{-2} = -\sfrac{3}{2} \pm \sfrac{1}{2}$.}
\label{fig:a2plusMid}
\end{figure}

If we add a path of two edges to $-A_2$ instead, we get the graph in \ref{fig:a2plus2}, which has a value of $-2*$.

\begin{figure}[h!]
\begin{center}\begin{tikzpicture}[scale=1]
    \node[draw, circle] (y) {} ;
    \node[draw, circle] (aEnd) [left of=y] {};
    \node[draw, circle] (bc) [below right of=y] {};
    \node[draw, circle] (cd) [below left of=bc] {};
    \node[draw, circle] (dEnd) [left of=cd] {};
    \node[draw, circle] (s) [right of=bc] {};
    \node[draw, circle] (s2) [right of=s] {};

    \path[-]
        (y) edge [color=red] node[below] {$a$} (aEnd)
        (y) edge [color=blue] node[above] {$b$} (bc)
        (bc) edge [color=blue] node[below] {$c$} (cd)
        (cd) edge [color=red] node[below] {$d$} (dEnd)
        (bc) edge [color=red] node[below] {} (s)
        (s) edge [color=red] (s2)
    ;
\end{tikzpicture} \end{center}
\caption{The value of this position is $\gameSet{-2, -2}{1*, 1+ *, -2, 0} = \gameSet{-2}{-2} = -2*$.}
\label{fig:a2plus2}
\end{figure}

In figure \ref{fig:oneClauseS}, we show and find the value of a graph that adds an edge from figure \ref{fig:a2plus}.

\begin{figure}[h!]
\begin{center} \begin{tikzpicture}[scale=1]
    \node[draw, circle] (y) {} ;
    \node[draw, circle] (x1) at (0, 1) {};
    \node[draw, circle] (aEnd) [left of=y] {};
    \node[draw, circle] (bc) [below right of=y] {};
    \node[draw, circle] (cd) [below left of=bc] {};
    \node[draw, circle] (dEnd) [left of=cd] {};
    \node[draw, circle] (s) [right of=bc] {};

    \path[-]
        (y) edge [color=blue] node[left] {$xy$} (x1)
        (y) edge [color=red] node[below] {$a$} (aEnd)
        (y) edge [color=blue] node[above] {$b$} (bc)
        (bc) edge [color=blue] node[below] {$c$} (cd)
        (cd) edge [color=red] node[below] {$d$} (dEnd)
        (bc) edge [color=red] node[below] {} (s)
    ;
\end{tikzpicture}  \end{center}
\caption{The value of this is $\gameSet{-\sfrac{1}{2} \pm \sfrac{1}{2}, *}{-1*, -\sfrac{1}{2} \pm \sfrac{1}{2}} = \gameSet{*}{-1*} = -\sfrac{1}{2} \pm \sfrac{1}{2}*$.}
\label{fig:oneClauseS}
\end{figure}

\begin{figure}[h!]
\begin{center} \begin{tikzpicture}[scale=1]
    \node[draw, circle, minimum width=2cm] (G) {$G$} ;
    \node[draw, fill=white, circle] (g1) at (.707, .707) {};
    \node[draw, fill=white, circle] (g2) at (1, 0) {};
    \node[draw, fill=white, circle] (g3) at (.707, -.707) {};
    \node[draw, fill=white, circle] (r1) at (2,0) {};
    \node[draw, fill=white, circle] (r2) at (3,0) {};

    \path[-]
        (g1) edge [color=blue]  (r1)
        (g2) edge [color=blue]  (r1)
        (g3) edge [color=blue]  (r1)
        (r1) edge [color=red] (r2)
    ;
\end{tikzpicture}  \end{center}
\caption{The value of this is always greater than or equal $G - 1$, by lemma \ref{lem:deathStar}, no matter how many blue edges connect $G$ to to the red edge.}
\label{fig:deathStar}
\end{figure}

\begin{lemma}[Death Star Lemma]
    Consider modifying a \parck{} position, $G$, to get $G'$ in the following way: we add a new vertex and connect it to $G$ with zero or more blue edges and also add a new pendant red edge off that vertex (as in figure \ref{fig:deathStar}), then $G' \geq G - 1$.
    \label{lem:deathStar}
\end{lemma}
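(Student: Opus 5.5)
To show $G' \geq G - 1$, it suffices to show $G' - G + 1 \geq 0$, i.e.\ that Left wins the sum $G' + (-G) + 1$ whenever Right moves first. Write $v$ for the new vertex, $u$ for the leaf of the pendant red edge $vu$, and $B$ for the set of blue edges joining $v$ to $G$. The plan is a strategy-stealing/mirroring argument in which Left copies moves between $G'$ and $-G$ and keeps the extra $1$ in reserve to pay for Right's one extra move, the red edge $vu$.

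The strategy Left will follow is this. While $v$ is still present, whenever Right plays an edge of $G$ inside $G'$, Left answers with the negated copy of that edge in $-G$, and vice versa. This is legal because the two copies of $G$ stay identical as long as both players' moves in $G'$ lie among the original edges of $G$. A move on an edge of $G$ may delete $v$'s neighbours and so remove some blue edges of $B$, but that only removes options and never breaks the mirroring.

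The remaining cases cover Right's other options. If Right plays $vu$, the vertex $v$ is deleted, so all of $B$ vanishes and $G'$ becomes exactly the current copy of $G$. Left then plays the $1$, leaving a mirrored position $H - H = 0$ with Right to move, which Right loses. If Right plays the $1$ (i.e.\ $-1$ in Right's sense there is nothing; the $1$ has only a Left move), nothing happens, so Right never gains by touching it. I also need Left never to be forced to use an edge of $B$: Left always has a mirroring reply, or, once the $G$-parts are exhausted while $vu$ remains, Right's only moves are $vu$ or nothing. If Right has no move, Left has already won. If Right plays $vu$, Left answers with the $1$ and wins.

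Finally I must check that Left is never left without a reply, because the mirror reply always exists. The one exception is after Right plays $vu$, and there Left uses the $1$ exactly once. So the extra $1$ is spent at most once, and Right's total move count in the sum never exceeds Left's.

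The main subtlety is confirming that Left's moves on $B$ are never needed, and that Right's moves in $-G$ (which are Left's moves in $G$, so blue edges) are mirrored inside $G'$ by the same blue edges. Those moves may delete $v$ together with $u$'s edge, but that only helps Left, since it removes Right's edge $vu$. In that event the remaining position is a mirror sum plus $1$, which is positive, so Left still wins. With these cases covered, Right moving first always loses, giving $G' - G + 1 \geq 0$.
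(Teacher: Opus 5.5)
Your proposal is correct and follows essentially the paper's route: mirror Right's moves between the copy of $G$ inside $G'$ and $-G$, and answer the pendant red edge $vu$ with the spare $1$. The paper packages this as an induction on positions, where each mirrored exchange leaves a smaller position of the same form $G_1' - G_1 + 1$, whereas you state the global strategy directly. One harmless slip: a move on an edge of $G$ never deletes $v$ or $vu$ (it only removes blue edges from $v$ to the deleted endpoints), so the last case you handle is vacuous, and your approach is mirroring rather than strategy-stealing.
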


\begin{proof}
    If no blue edges were added, then we're done because $G' = G - 1$.  If there are blue edges, then we have some further work to do.
    
    In that case, we prove, equivalently, that $G' - G + 1 \geq 0$ by inductively showing that Right loses going first.  Right has three different places they may be able to play:
    \begin{itemize}
        \item[] $-G$: Left can respond with the associated play in $G'$.  Now Right loses going first by the inductive hypothesis, even if Left's play removed all the added blue edges.

        \item[] Added red edge in $G'$: the game is now $G - G + 1 = 1$, so Left wins.
        \item[] Another red edge in $G'$: then Left makes the associated move in $-G$ and Right loses going first by the inductive hypothesis, even if Right's play removed all the added blue edges.
    \end{itemize}
\end{proof}

\section{\cclass{PSPACE} Reduction}

We reduce directly from \ruleset{Positive CNF}.  There are two different gadgets used to complete this reduction: one for variables and one for clauses. Throughout we assume there are an odd number of variables. If not, we add an extra variable not included in any clause. Similarly, we assume there is more than one clause. If not, we simply clone the solitary clause in the final construction with no effect on the outcome of the cnf itself.

The variable gadget is described in figure \ref{fig:variable}.  These gadgets are hot because playing on the T and F edges changes the value of the $-A_2$ portion, incentivizing play on those edges.  If Left plays on a T, that removes all three of the connected red edges as well, leaving two unconnected blue edges and the individual unconnected red edge for a total value of $1$.  If Right plays on an F, then all that remains is the entire $-A_2$ part and the separate red edge, for a total value of $-1$.  Thus these gadgets each act as a $\pm 1$ switch. 

\begin{figure}[h!]
\begin{center}\begin{tikzpicture}[scale=.6]

    \node[draw, circle] (nA2) at (0,0) {} ;
    \node[draw, circle] (A21) at (-1,1) {};
    \node[draw, circle] (A22) at (1,1) {};
    \node[draw, circle] (A23) at (-1,2) {};
    \node[draw, circle] (A24) at (1,2) {};
    \node[draw, circle] (varmid) at (0, -1.5) {};
    \node[draw, circle] (varx) at (0, -3.0) {$x_i$};
    \node[draw, circle] (extra1) [above of=A23] {};
    \node[draw, circle] (extra2) [above of=A24] {};

    \path[-]
        (varx) edge [color=red] node[left] {F} (varmid)
        (varmid) edge [color=blue] node[left] {T} (nA2)
        (extra1) edge [color=red] (extra2)
        (nA2) edge [color=red] (A21)
        (nA2) edge [color=red] (A22)
        (A21) edge [color=blue] (A23)
        (A22) edge [color=blue] (A24)
    ;
\end{tikzpicture} \end{center}
\caption{Variable Gadget.  True plays on the blue edge marked T to choose variable $x_i$, whereas false plays on the red edge F.}
\label{fig:variable}
\end{figure}

The clause gadgets, as described in \ref{fig:clause}, are each connected to the $x_i$ vertices of the variables in them, as well as to a single master $s$ vertex (see figure \ref{fig:fullGraph}).  If clause $j$ loses all connections to the $x_i$ vertices and $s_j$ is chosen by Right, also removing $b_j$ and $c_j$, then only the disconnected edges $a_j$ and $d_j$ remain, for a value of -2.

\begin{figure}[h!]
\begin{center} 
\begin{tikzpicture}[scale=.8]    
    \node[draw, circle] (x1) at (0,10) {$x_1$};
    \node[draw, circle] (x2) at (4,10) {$x_2$};
    \node[draw, circle] (x3) at (8,10) {$x_3$};
    \node[draw, circle] (x4) at (12,10) {$x_4$};
    \node[draw, circle] (x5) at (16,10) {$x_5$};

    \node[draw, circle] (r11) at (-1,15) {};
    \node[draw, circle] (r12) at (1,15) {};
    \node[draw, circle] (r21) at (3,15) {};
    \node[draw, circle] (r22) at (5,15) {};
    \node[draw, circle] (r31) at (7,15) {};
    \node[draw, circle] (r32) at (9,15) {};
    \node[draw, circle] (r41) at (11,15) {};
    \node[draw, circle] (r42) at (13,15) {};
    \node[draw, circle] (r51) at (15,15) {};
    \node[draw, circle] (r52) at (17,15) {};    
    
    \node[draw, circle] (r1) at (0,11) {};
    \node[draw, circle] (s1) at (0,12) {};    
    \node[draw, circle] (t1) at (-1,13) {};    
    \node[draw, circle] (u1) at (1,13) {};    
    \node[draw, circle] (v1) at (-1,14) {};
    \node[draw, circle] (w1) at (1,14) {};    

    \node[draw, circle] (r2) at (4,11) {};
    \node[draw, circle] (s2) at (4,12) {};    
    \node[draw, circle] (t2) at (3,13) {};    
    \node[draw, circle] (u2) at (5,13) {};    
    \node[draw, circle] (v2) at (3,14) {};
    \node[draw, circle] (w2) at (5,14) {};

    \node[draw, circle] (r3) at (8,11) {};
    \node[draw, circle] (s3) at (8,12) {};    
    \node[draw, circle] (t3) at (7,13) {};    
    \node[draw, circle] (u3) at (9,13) {};    
    \node[draw, circle] (v3) at (7,14) {};
    \node[draw, circle] (w3) at (9,14) {};

    \node[draw, circle] (r4) at (12,11) {};
    \node[draw, circle] (s4) at (12,12) {};    
    \node[draw, circle] (t4) at (11,13) {};    
    \node[draw, circle] (u4) at (13,13) {};    
    \node[draw, circle] (v4) at (11,14) {};
    \node[draw, circle] (w4) at (13,14) {}; 

    \node[draw, circle] (r5) at (16,11) {};
    \node[draw, circle] (s5) at (16,12) {};    
    \node[draw, circle] (t5) at (15,13) {};    
    \node[draw, circle] (u5) at (17,13) {};    
    \node[draw, circle] (v5) at (15,14) {};
    \node[draw, circle] (w5) at (17,14) {};     
    
    \node[draw, circle] (c1) at (3,3) {};
    \node[draw, circle] (c2) at (9,3) {};    
    \node[draw, circle] (c3) at (15,3) {};
    
    \node[draw, circle, above left=of c1] (a1) {};
    \node[draw, circle, left=of a1] (b1) {};
    \node[draw, circle, left=of c1] (d1) {};
    \node[draw, circle, left=of d1] (e1) {};

    \node[draw, circle, above left=of c2] (a2) {};
    \node[draw, circle, left=of a2] (b2) {};
    \node[draw, circle, left=of c2] (d2) {};
    \node[draw, circle, left=of d2] (e2) {};
    
    \node[draw, circle, above left=of c3] (a3) {};
    \node[draw, circle, left=of a3] (b3) {};
    \node[draw, circle, left=of c3] (d3) {};
    \node[draw, circle, left=of d3] (e3) {};    
    \node[draw, circle] (s) at (6,0) {$s$};
    
    \path[-]
    	(c1) edge [color=blue] (a1)
	(c1) edge [color=blue] (d1)
	(a1) edge [color=red] (b1)
	(d1) edge [color=red] (e1)
	
    	(c2) edge [color=blue] (a2)
	(c2) edge [color=blue] (d2)
	(a2) edge [color=red] (b2)
	(d2) edge [color=red] (e2)
	
    	(c3) edge [color=blue] (a3)
	(c3) edge [color=blue] (d3)
	(a3) edge [color=red] (b3)
	(d3) edge [color=red] (e3)	
	
	(s) edge [color=red] (c1)
	(s) edge [color=red] (c2)
	(s) edge [color=red] (c3)	
	
	(x1) edge [color=blue] (a1)
	(x2) edge [color=blue] (a1)	

	(x2) edge [color=blue] (a2)
	(x3) edge [color=blue] (a2)				
	(x4) edge [color=blue] (a2)					

	(x1) edge [color=blue] (a3)
	(x3) edge [color=blue] (a3)
	
	(x1) edge [color=red] (r1)
	(r1) edge [color=blue] (s1)
	(s1) edge [color=red] (t1)
	(s1) edge [color=red] (u1)	
	(t1) edge [color=blue] (v1)
	(u1) edge [color=blue] (w1)
	
	(x2) edge [color=red] (r2)
	(r2) edge [color=blue] (s2)
	(s2) edge [color=red] (t2)
	(s2) edge [color=red] (u2)	
	(t2) edge [color=blue] (v2)
	(u2) edge [color=blue] (w2)
	
	(x3) edge [color=red] (r3)
	(r3) edge [color=blue] (s3)
	(s3) edge [color=red] (t3)
	(s3) edge [color=red] (u3)	
	(t3) edge [color=blue] (v3)
	(u3) edge [color=blue] (w3)
	
	(x4) edge [color=red] (r4)
	(r4) edge [color=blue] (s4)
	(s4) edge [color=red] (t4)
	(s4) edge [color=red] (u4)	
	(t4) edge [color=blue] (v4)
	(u4) edge [color=blue] (w4)		

	(x5) edge [color=red] (r5)
	(r5) edge [color=blue] (s5)
	(s5) edge [color=red] (t5)
	(s5) edge [color=red] (u5)	
	(t5) edge [color=blue] (v5)
	(u5) edge [color=blue] (w5)		    

    (r11) edge [color=red] (r12)
    (r21) edge [color=red] (r22)
    (r31) edge [color=red] (r32)
    (r41) edge [color=red] (r42)
    (r51) edge [color=red] (r52)
    ;
\end{tikzpicture} 
\end{center}
\caption{The construction for the positive cnf $(x_1 \vee x_2) \wedge (x_2 \vee x_3 \vee x_4) \wedge (x_1 \vee x_3)$. An extra variable gadget is added to ensure an odd number.}
\label{fig:fullGraph}
\end{figure}

Playing on $s_j$ will also remove all other red edges $s_k$ ($k \neq j$).  This leaves a position with value zero, as shown in figure \ref{fig:a2plus}.

\begin{figure}[h!]
\begin{center}\begin{tikzpicture}[scale=1]
    \node[draw, circle] (y) {$y_j$} ;
    \node[draw, circle] (x1) at (-3, 3.5) {$x_f$};
    \node[draw, circle] (x2) at (-1, 3.5) {$x_g$};
    \node[draw, circle] (x3) at (1.5, 3.5) {$x_h$};
    \node[draw, circle] (aEnd) at (-1.5, 0) {};
    \node[draw, circle] (bc) at (1,-1) {};
    \node[draw, circle] (cd) [left of=bc] {};
    \node[draw, circle] (dEnd) at (-1.5, -1) {};
    \node[draw, circle] (s) at (2, -2) {$s$};

    \path[-]
        (y) edge [color=blue] node[left] {$x_fy_j$} (x1)
        (y) edge [color=blue] node[left] {$x_gy_j$} (x2)
        (y) edge [color=blue] node[left] {$x_hy_j$} (x3)
        (y) edge [color=red] node[below] {$a_j$} (aEnd)
        (y) edge [color=blue] node[above] {$b_j$} (bc)
        (bc) edge [color=blue] node[below] {$c_j$} (cd)
        (cd) edge [color=red] node[below] {$d_j$} (dEnd)
        (bc) edge [color=red] node[below] {$s_j$} (s)
    ;
\end{tikzpicture} \end{center}
\caption{Clause Gadget, consisting of edges $a_j$, $b_j$, $c_j$, $d_j$, and $s_j$.  For each clause,$s_j$ which connects the clause gadget to the singular node $s$.  For each variable in clause $j$, $x_i$, there is an edge, $x_iy_j$ connecting node $x_i$ and $y_j$.}
\label{fig:clause}
\end{figure}

The value of a single clause gadget alone, including the red edge to $s$, is $-\sfrac{1}{2} + \sfrac{1}{2}*$ as long as it's connected to at least one variable vertex.

We need to assure that players restrict their moves to the variable gadgets until they are all chosen True or False.

\begin{lemma}
    Once all variables have been claimed, if $s$ and all edges $s_j$ are gone from all clause gadgets, then the graph has value $0$, whether or not any of the clauses are still connected to variable nodes.
    \label{lem:noS}
\end{lemma}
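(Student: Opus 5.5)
Throughout, I read the statement as concerning the clause part of the graph, i.e.\ what remains once every variable gadget has been resolved. After a Left move on T, the vertex $x_i$ survives but is now only incident to its clause edges $x_iy_j$; the rest of that variable gadget is worth $1$. After a Right move on F, $x_i$ is deleted, so all of its clause edges disappear; the rest of the gadget is worth $-1$. Those $\pm 1$ pieces are separate summands, so the claim to prove is that the remaining clause structure is $0$.

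Once $s$ and every $s_j$ are gone, the clause structure $H$ is a union of gadgets, one per clause. Gadget $j$ consists of the path $a_j,b_j,c_j,d_j$ (red, blue, blue, red) together with zero or more blue edges $x_iy_j$ to surviving, i.e.\ true, variable vertices. These gadgets are not disjoint summands, since a true $x_i$ may be shared among several clauses. So I will not simply add the values from figure~\ref{fig:a2plus}. Instead I will prove directly, by induction on the number of edges, that every graph $H$ of this shape has value $0$. Concretely, I will show that whoever moves first in $H$ loses. The induction has to cover the whole family of such graphs, including the degenerate pieces produced during play, such as a lone c--d path or a lone red edge.

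\emph{Right moves first.} If Right plays $a_j$, the vertex $y_j$ is removed, and with it $b_j$ and every $x_iy_j$. In gadget $j$ only the blue--red path $c_j,d_j$ is left, and Left answers on $c_j$, deleting that gadget entirely. If Right plays $d_j$, the edges $c_j,d_j$ vanish, and Left answers on $b_j$. This removes $y_j$, hence $a_j$ and all $x_iy_j$, so again gadget $j$ is gone. In both cases the other gadgets have lost at most nothing. Answers on $c_j$ or $b_j$ never touch an $x_i$, so the remaining graph is again of the same shape with fewer edges, and by induction Right, now moving first, loses.

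\emph{Left moves first.} There are three kinds of move.
\begin{itemize}
\item If Left plays $x_iy_j$, then $y_j$ and $x_i$ disappear. This removes $a_j,b_j$ and also every other edge $x_iy_k$. The position becomes $(c_j,d_j) + H'$, where $H'$ has the required shape, since deleting blue $xy$ edges keeps it in the family. Right answers $d_j$ and leaves $H'$ with Left to move, so Left loses by induction.
\item If Left plays $b_j$, then $y_j$ and the vertex shared by $b_j,c_j$ disappear, leaving the isolated red edge $d_j$ plus an $H'$ of the same shape. Right plays $d_j$, returning Left to move in $H'$.
\item If Left plays $c_j$, only $y_j$ with $a_j$ and its $x$-edges remain of gadget $j$. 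Right plays $a_j$, clearing gadget $j$ together with its $x$-edges.
\end{itemize}
In every case Right's answer restores the family shape, and induction finishes the argument. The base case, the empty graph, is trivially $0$.

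The main obstacle is exactly the sharing of the vertices $x_i$. A Left move on one $x_iy_j$ deletes edges in other clause gadgets, so the value-$0$ computations in figure~\ref{fig:a2plus} cannot just be summed. The fix is to make the inductive hypothesis range over all graphs of this shape: any number of clause paths, each $y_j$ joined by blue edges to an arbitrary set of shared pendant-type vertices, with some gadgets possibly partially removed. I must also check that every reply above, namely $c_j$, $b_j$, $d_j$ and $a_j$, leaves the other gadgets intact or merely removes some $xy$ edges, so that the resulting position stays in that family.
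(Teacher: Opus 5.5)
Your proof is correct and is essentially the paper's argument: both show that the first player loses, using the same replies ($c_j$ answers $a_j$, $b_j$ answers $d_j$, $a_j$ answers $c_j$, and Right answers $x_iy_j$ or $b_j$ by playing on the leftover $*$ or $-1$ piece), and both then apply induction to what remains (you on edges, the paper on clauses). One correction to your framing: the inductive family must contain only complete gadgets, because partially destroyed ones such as a lone $d_j$ (value $-1$) or a lone $c_j,d_j$ path (value $*$) are not $0$; since each of your replies deletes gadget $j$ entirely, the induction never needs those pieces.
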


\begin{proof}
Proof by induction on the number of clauses.  The base case of a single clause is either the reverse of $-A_2$, which has value $0$, or there are one or more blue edges as in figure \ref{fig:a2plus}, which also has value $0$.

We prove the inductive case by showing that the first player to move loses. Without loss of generality we assume play occurs on clause $j$.  First, we cover Left's options:
\begin{itemize}
    \item If Left chooses an $x_iy_j$, that also removes edges $a_j$ and $b_j$ as well as any other edges $x_iy_l$ and $x_ky_j$.  The remaining $c_j$, $d_j$ pair has value $*$.  By the inductive hypothesis, the rest of the graph has value $0$.  $0 + * = *$ is a win for the next player.
    \item If Left chooses $b_j$, then only $d_j$ remains from this clause.  By the inductive hypothesis, the total remaining graph has value $-1$, a loss for Left.
    \item If Left chooses $c_j$, then Right can choose $a_j$ to remove the rest of the clause.  The remaining graph has value zero by the inductive hypothesis. 
\end{itemize}
For Right's options:
\begin{itemize}
    \item If Right chooses $a_j$, then just like when Left started by choosing $x_iy_j$, the value of the graph is now $*$, so Left can win.
    \item If Right chooses $d_j$, Left can respond with $b_j$ to win by the inductive hypothesis.
\end{itemize}
All moves are losing moves, so the game has value $0$.
\end{proof}

\begin{figure}[h!]
\begin{center} 
\begin{tikzpicture}[scale=.6]    
    \node[draw, circle] (x1) at (0,10) {};
    \node[draw, circle] (x2) at (4,10) {};
    \node[draw, circle] (x3) at (8,10) {};
    \node[draw, circle] (x4) at (12,10) {};    
    
    \node[draw, circle] (c1) at (3,3) {};
    \node[draw, circle] (c2) at (9,3) {};    
    \node[draw, circle] (c3) at (15,3) {};
    
    \node[draw, circle, above left=of c1] (a1) {};
    \node[draw, circle, left=of a1] (b1) {};
    \node[draw, circle, left=of c1] (d1) {};
    \node[draw, circle, left=of d1] (e1) {};

    \node[draw, circle, above left=of c2] (a2) {};
    \node[draw, circle, left=of a2] (b2) {};
    \node[draw, circle, left=of c2] (d2) {};
    \node[draw, circle, left=of d2] (e2) {};
    
    \node[draw, circle, above left=of c3] (a3) {};
    \node[draw, circle, left=of a3] (b3) {};
    \node[draw, circle, left=of c3] (d3) {};
    \node[draw, circle, left=of d3] (e3) {};    
    \node[draw, circle] (s) at (6,0) {$s$};
    
    \path[-]
    (c1) edge [color=blue] (a1)
	(c1) edge [color=blue] (d1)
	(a1) edge [color=red] (b1)
	(d1) edge [color=red] (e1)
	
    (c2) edge [color=blue] (a2)
	(c2) edge [color=blue] (d2)
	(a2) edge [color=red] (b2)
	(d2) edge [color=red] (e2)
	
    (c3) edge [color=blue] (a3)
	(c3) edge [color=blue] (d3)
	(a3) edge [color=red] (b3)
	(d3) edge [color=red] (e3)	
	
	(s) edge [color=red] (c1)
	(s) edge [color=red] (c2)
	(s) edge [color=red] (c3)	
	
	(x1) edge [color=blue] (a1)
	(x2) edge [color=blue] (a1)	

	(x2) edge [color=blue] (a2)
	(x3) edge [color=blue] (a2)				
	(x4) edge [color=blue] (a2)					

	(x1) edge [color=blue] (a3)
	(x3) edge [color=blue] (a3)	    

    ;
\end{tikzpicture} 
\end{center}
\caption{An example of a position remaining after all variables have been claimed. Not shown are the isolated red edges from the variable gadgets. The value is $\{*|-1*\}$ when one clause is present, and $-1$ if there are more.}
\label{fig:noMoreVars}
\end{figure}

\begin{corollary}
    If Left chooses $c_j$ from a position without $s$, as in lemma \ref{lem:noS}, the value of the game is $*$. 
    \label{cor:noSStar}
\end{corollary}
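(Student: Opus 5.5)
Picking $c_j$ deletes exactly the edges touching its endpoints, which lets us split the resulting position into a disjunctive sum: the remainder of clause $j$, plus every other clause gadget unchanged. We value each part separately using Lemma \ref{lem:noS}.

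\emph{First step: what remains of clause $j$.} Edge $c_j$ joins the common endpoint of $b_j$ and $c_j$ to the common endpoint of $c_j$ and $d_j$. Choosing it therefore removes $b_j$, $c_j$ and $d_j$. Since $s$ and all $s_j$ are already gone, the only surviving edges of clause $j$ are the red edge $a_j$ and any blue edges $x_iy_j$ still present. Together these form a star centred at $y_j$, and $y_j$ is not adjacent to any other part of clause $j$.

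\emph{Second step: the star is entangled with other clauses.} The edges $x_iy_j$ share the vertices $x_i$ with edges $x_iy_l$ of other clauses, so the star is not an isolated component. There are two cases.
\begin{itemize}
\item[] If no $x_iy_j$ edges remain, clause $j$ leaves just the isolated $a_j$, worth $-1$. The other clauses have value $0$ by Lemma \ref{lem:noS}. The total is $-1$, not $*$.
\item[] If some $x_iy_j$ edges remain, I claim the total is $*$, and I would show this by verifying both players have a winning first move to $0$. Right plays $a_j$: this deletes $y_j$, hence every $x_iy_j$, and what is left is exactly a position covered by Lemma \ref{lem:noS}, so it has value $0$. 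Left plays some $x_iy_j$: this deletes $a_j$, vertex $x_i$, and all edges at $x_i$, including edges $x_iy_l$ into other clauses. The remainder of clause $j$ is then either empty or the other blue edges $x_ky_j$ of the star, which sit at the already-deleted vertex $y_j$ and so are gone as well. What is left is again a no-$s$ position of the kind in Lemma \ref{lem:noS}, so it has value $0$.
\end{itemize}
Both players move to $0$, so the position is $\{0 \mid 0\} = *$ without needing to rule out other options.

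\emph{Reading of the corollary and the main obstacle.} The first case shows the statement needs clause $j$ to still be connected to a variable node; I would read the corollary as carrying that implicit hypothesis, which matches the analysis of $c_j$ in the proof of Lemma \ref{lem:noS}, where $c_j$ was answered by $a_j$. The step I expect to need most care is checking that after Left's $x_iy_j$ move the remainder still satisfies the hypotheses of Lemma \ref{lem:noS}. Deleting $x_i$ removes edges $x_iy_l$ from other clauses, but that lemma explicitly allows clauses to have lost any subset of their variable connections, since its induction treats both forms in figure \ref{fig:a2plus} and the bare reverse $-A_2$ as value $0$. So the remainder qualifies.
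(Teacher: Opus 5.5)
Your two moves to $0$ are correct. You are also right that the statement tacitly needs clause $j$ to keep at least one edge $x_iy_j$; the paper's base case assumes this as well. The gap is the sentence concluding $\{0\mid 0\}=*$ ``without needing to rule out other options.'' A move to $0$ for each player only shows the position is a first-player win. Extra options can change the value: $\{0,1\mid 0\}=\{1\mid 0\}$, and $\{0,*\mid 0,*\}$ is $*2$, and neither is $*$. The exact value is what is used later. In Theorem \ref{thm:allPositive}, the sum of the isolated $d_j$ (value $-1$) and this position, with Left to move, must be a Right win. That would fail if the position were, say, $\{1\mid 0\}$. So your argument only covers the base case, where clause $j$ is the only clause and there really are no other options.

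What is missing is the part the paper's proof is devoted to. You need an induction on the number of clauses showing that whoever moves first in $P+*$ loses, where $P$ is the position after $c_j$. The options you have not handled are those in the other clauses $l\neq j$:
\begin{itemize}
\item Left's $b_l$ leaves an isolated $d_l$ plus a smaller instance, hence $-1*$ by induction.
\item Left's $x_iy_l$ leaves the pair $c_l,d_l$ (value $*$) plus either a smaller instance (value $*$) or, if it cut the last edge at $y_j$, an isolated $a_j$ plus a position of Lemma \ref{lem:noS}. The total is $0$ or $-1*$.
\item Left's $c_l$ creates a second star that you cannot value directly. However, Right answers with $a_l$, which deletes the rest of clause $l$ and leaves a smaller instance of value $*$.
\item Right's $a_l$ gives $*+*=0$.
\item Right's $d_l$ is answered by Left's $b_l$, again leaving a smaller instance of value $*$.
\end{itemize}
Your two moves handle the moves in the $*$ component and the options $a_j$ and $x_iy_j$. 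Together with the cases above, they show $P+*=0$, that is, $P=*$.
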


\begin{proof}
$c_j$ was chosen, so only $b_j$ and $d_j$ are also missing from what was left of clause $j$.

Proof by induction.  The base case of one clause is just a single $a_j$ connected to $x_iy_j$ edges, which is $*$.  

In the recursive case, each player's best move is to $0$.  If Left chooses one of the $x_iy_j$ edges, then the position is zero by lemma \ref{lem:noS}.  If they choose anything else, Right can always at least return to $*$ by induction by using the same moves spelled out in lemma \ref{lem:noS}, unless they can isolate $a_j$ in which case the position has value -1. 

If Right chooses $a_j$, the position is zero by lemma \ref{lem:noS}.  If they play anywhere else, Left can return to $*$ by induction by following the responses from lemma \ref{lem:noS}.

The best moves for each player are to $0$, so the value is $*$.
\end{proof}

\begin{theorem}
    For any \parck{} position consisting of clause gadgets connected to individual variable $x_i$ vertices, if each $y_j$ is connected to at least one $x_i$, then the value of the whole graph is -1 unless there is only one clause gadget, in which case the value is $-\sfrac{1}{2} \pm \sfrac{1}{2}*$.
    \label{thm:allPositive}
\end{theorem}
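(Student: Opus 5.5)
Let $H_n$ be the position in Theorem~\ref{thm:allPositive} with $n$ clause gadgets: each clause has edges $a_j,b_j,c_j,d_j,s_j$, all $s_j$ meet at the common vertex $s$, and each $y_j$ has at least one edge $x_iy_j$. I would prove the theorem by induction on $n$.

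\textbf{Base case $n=1$.} This is the position of figure~\ref{fig:oneClauseS}, possibly with several edges $x_iy_1$. As noted for figure~\ref{fig:a2plus}, playing on any one $x_iy_1$ is equivalent to playing on all of them, so the extra edges do not matter. The value computed there is $\gameSet{*}{-1*}=-\sfrac12\pm\sfrac12*$.

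\textbf{Inductive step, $n\ge2$.} I would show that $H_n+1=0$, i.e.\ the first player loses $H_n+1$. This is easier than listing every option of $H_n$ and simplifying. The key observation is that any move on some $s_j$ removes the vertex $s$, and hence every other $s_k$. What remains is the position of Lemma~\ref{lem:noS} with $c_j$ and $b_j$ deleted. That is exactly the situation of Corollary~\ref{cor:noSStar}, except that $d_j$ now remains as an isolated red edge, worth $-1$. So $H_n$ has a Right option of value $*-1=-1*$. I will also need the values after single moves elsewhere, and I would compute them clause by clause.

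\emph{Right moving first in $H_n+1$.}
\begin{itemize}
\item[] Right plays $s_j$: the result is $-1*+1=*$, and Left wins by moving to $0$.
\item[] Right plays $a_j$: this deletes $y_j$ and its $x$-edges, leaving $c_j,d_j$ hanging off $s$ with the other clauses. Left answers with $c_j$, which also kills $s_j$ and $d_j$. That reduces to $H_{n-1}$ plus possibly the $*$ and $\pm$ pieces. By induction $H_{n-1}+1$ is $0$ (or is $\pm\sfrac12 *$-like when $n-1=1$).
\item[] Right plays $d_j$: Left answers $b_j$, mirroring Lemma~\ref{lem:noS}.
\end{itemize}

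\emph{Left moving first in $H_n+1$.}
\begin{itemize}
\item[] Left plays $x_iy_j$: this may remove several $y$'s at once. Right answers $s_k$ for a clause that survives.
\item[] Left plays $b_j$ or $c_j$: this removes $s_j$ but keeps $s$. Right answers $s_k$ for another $k$, which exists since $n\ge2$. The Death Star Lemma~\ref{lem:deathStar} controls the leftover star structure at $s$: it bounds the value from below by something minus $1$, which gives the inequality needed on one side.
\item[] Left plays the $+1$: Right answers on some $s_j$, leaving $-1*$, and Right wins.
\end{itemize}

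\textbf{Main obstacle.} The hard part is the Left moves that delete the vertex $y_j$, or several $y$'s at once through a shared $x_i$. These can leave clause remnants with no $y$, or clauses disconnected from every variable, which fall outside the inductive hypothesis. I would handle them with auxiliary values: a remnant consisting of $c_j,d_j$ attached to $s$ behaves like the forked paths of figures~\ref{fig:3path}--\ref{fig:4path2fork}, whose values are $-\sfrac12\pm\sfrac12$, $-1*$ and $-1$. The rough plan is to strengthen the induction so that it covers such remnants, showing in each case that Right's reply on a surviving $s_k$ reaches a value at most $-1$. The special case $n=2$ may need to be checked directly, since the inductive value for one clause is not a number.
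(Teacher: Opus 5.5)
Your framework (show $H_n+1=0$ by giving a winning reply to every first move) is the paper's. Your Right-first cases and the Left move on the $+1$ are fine, but the Left-first cases are not closed.

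The case you call the main obstacle, $x_iy_j$, is left open, yet it needs no strengthened induction. Let Right reply with $s_j$ in the \emph{same} clause. That edge is still present, because the move deletes only the vertices $x_i$ and $y_j$; the other $y_l$ lose edges, not vertices. The reply deletes $s$ and $c_j$, so $d_j$ becomes an isolated red edge that cancels the $+1$. Everything else is an $s$-free position, which Lemma~\ref{lem:noS} evaluates to $0$ regardless of which clauses still touch variables. Left then faces $0$ and loses.

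For $b_j$ and $c_j$, Lemma~\ref{lem:deathStar} is the wrong tool. It bounds a position from below, whereas after Left's move you must show that Right wins moving first, and no lower bound can deliver that. For $b_j$ nothing extra is needed. Only the isolated $d_j$ survives from clause $j$, and the other clauses are untouched and still attached to variables. So the position is $H_{n-1}-1+1=H_{n-1}$ with Right to move, a Right win whether $H_{n-1}=-1$ or $H_{n-1}=\gameSet{*}{-1*}$.

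For $c_j$, your reply $s_k$ leaves, besides $d_k$ and the $+1$, an $s$-free position in which \emph{two} clauses are cut down to $a$ plus variable edges. Neither Lemma~\ref{lem:noS} nor Corollary~\ref{cor:noSStar} evaluates that, so you would need a further lemma. The paper instead replies with $a_j$, which removes clause $j$ entirely and leaves $H_{n-1}+1=0$ with Left to move.

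That reply fails for $n=2$, because $H_1+1=\gameSet{1*}{*}$ is a win for Left moving first. This is why the paper takes $n=2$ as a second base case, settled by exhaustive analysis, and runs the inductive step only for $n\ge 3$. Your caveat about $n=2$ is therefore not optional: it is exactly where the uniform argument breaks. At $n=2$ your $s_k$ reply is in fact the correct one. The two cut-down clauses form a position whose Left options are among $*$ and $-1$ and whose Right options are $*$, so it equals $0$ and Left loses.
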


\begin{proof}
Proof by induction.  We require as base case both one clause and two clause positions. The one clause case is covered by earlier analysis that the value of a single clause is $-\sfrac{1}{2} \pm \sfrac{1}{2}*$ as shown in figure \ref{fig:oneClauseS}. For two clauses, an exhaustive analysis yields $-1$ irrespective of the number of variable nodes. 


In our recursive case, we assume that we are working with a graph with three or more clauses, each of which is connected to at least one variable.  See figure \ref{fig:clause} for reference. Without loss of generality, we analyze each player's move options on some clause $j$ and show that that player loses on the graph with a single blue edge added.  This will show that the sum of the graph plus the blue edge is zero, which proves our claim.  First, for Left:

\begin{itemize}
    \item Left chooses the single blue edge we added.  Right can win by taking $s_j$.  This leaves an isolated $d_j$ with value -1 and the rest of the graph has value $*$ by corollary \ref{cor:noSStar}.  $-1*$ is a winning value for Right.
    \item Left chooses edge $x_iy_j$, removing $a_j$, $b_j$, and whatever other $xy$ edges were connected to $x_i$ or $y_j$.  Right can win by responding with $s_j$.  This leaves the isolated red edge $d_j$, which cancels out the disconnected blue edge.  Since $s$ has been deleted, the remainder of the graph is a case of lemma \ref{lem:noS}, which has value zero.  
    \item Left chooses edge $b_j$, also removing $a_j$, $c_j$, $s_j$, and all edges of the form $x_iy_j$, leaving only $d_j$ from the clause.  By the inductive hypothesis, the remainder of the clause-gadgets-graph has value -1.  Adding in the remaining blue edge and the $d_j$ gives us a value of -1 for the entire graph, so Left's move causes them to lose.
    \item Left chooses $c_j$, also removing $b_j$, $d_j$, and $s_j$.  Right can respond by taking $a_j$, which removes all edges $x_iy_j$.  By the inductive hypothesis the remaining graph has value $0$ when including the extra blue edge, so Right has a winning response.
\end{itemize}

We continue by considering the options for Right:

\begin{itemize}
    \item Right chooses edge $a_j$, also removing $b_j$ and all edges $x_iy_j$.  Left can respond by choosing $c_j$, which just removes all other edges from this clause gadget.  By the inductive hypothesis the remaining graph has value zero with the extra blue edge, so Left wins.
    \item Right chooses $d_j$, also removing $c_j$.  Left can win by choosing $b_j$, which removes all remaining edges from the clause.  By the inductive hypothesis and with the blue edge, the graph has value zero.
    \item Right chooses $s_j$, also removing $b_j$, $c_j$ and all other $s_l$ edges.  By corollary \ref{cor:noSStar}, the remainder of the graph has value $*$, since $d_j$ cancels out the extra blue edge.  Left can win on $*$.  
\end{itemize}

All options for our graph plus $1$ are losing options, so the value of the graph in the inductive case must be -1.

\end{proof}

\begin{lemma}
The value of a \parck\ position with all variables claimed, $s$ present, and at least one false clause is strictly less than $-1$. \label{lem:lessThanNegOne}
\end{lemma}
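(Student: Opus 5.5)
The plan is to prove $G + 1 < 0$, where $G$ is the clause-gadget part of the position, using a single extra blue edge as in the proof of theorem \ref{thm:allPositive}. Strict inequality means that Right wins $G+1$ whoever moves first.

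Fix a false clause $j$. Every variable in it was claimed False, so every edge $x_iy_j$ is gone. The key tool is Right's move on $s_j$. It deletes $b_j$, $c_j$ and every other $s_l$. What remains of clause $j$ is the isolated pair $a_j, d_j$, worth $-2$. The rest of the graph no longer contains $s$, so lemma \ref{lem:noS} and corollary \ref{cor:noSStar} apply to it.

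\textbf{Right moves first.} Right plays $s_j$. The result is $-2 + 0 + 1 = -1$, using lemma \ref{lem:noS} for the rest of the graph, which Right wins.

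\textbf{Left moves first.} I argue by induction on the number of clauses, as in theorem \ref{thm:allPositive}, and go through Left's options.
\begin{itemize}
\item Left takes the extra blue edge. Right answers $s_j$, leaving $-2 + 0 < 0$.
\item Left takes some $x_iy_k$ with $k\neq j$. This removes $a_k$ and $b_k$ and leaves $c_k, d_k$. Right answers $s_j$. Corollary \ref{cor:noSStar} and lemma \ref{lem:noS} then give $-2 + * + 1 = -1*$, a win for Right.
\item Left takes $b_k$ or $c_k$ with $k\neq j$. Vertex $s$ survives, since only $s_k$ is lost. Right again answers $s_j$. The total is at most $-2 + 1 + (\text{a residue worth at most } * \text{ or } -1)$, using corollary \ref{cor:noSStar} in the $c_k$ case and the isolated $d_k$ in the $b_k$ case. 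This is negative.
\item Left plays inside clause $j$ itself, which can only be $b_j$ or $c_j$.
  \begin{itemize}
  \item After $b_j$, only $d_j$ (worth $-1$) remains of the clause, cancelling the blue edge. The other clauses still have $s$. By induction they are worth less than $-1$ if another false clause exists. Otherwise theorem \ref{thm:allPositive} gives $-1$, or $-\sfrac{1}{2}\pm\sfrac{1}{2}*$ if only one clause remains. In each case Right wins moving next. In the last case Right moves to $-1*$.
  \item After $c_j$, the isolated $a_j$ cancels the blue edge, and the same induction or theorem \ref{thm:allPositive} finishes the argument.
  \end{itemize}
\end{itemize}
The base case, a single false clause, is the graph of figure \ref{fig:a2plusMid}. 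Its value is $-\sfrac{3}{2}\pm\sfrac{1}{2} < -1$.

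\textbf{Expected obstacle.} I expect the bookkeeping to be the main difficulty. First, I must confirm that the isolated edges left over from the variable gadgets cancel out or can be ignored. Second, I must handle the small cases, with one or two remaining clauses, where theorem \ref{thm:allPositive} gives a switch rather than $-1$. I would first check that each such switch still leaves Right with a move to a negative value.
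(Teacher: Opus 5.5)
Your base case is false, and the induction depends on it. A single false clause with $s$ is the graph of figure \ref{fig:a2plusMid}. Its value $\{-1\mid-2\}$ is \emph{confused} with $-1$, not less than it, since $\{-1\mid-2\}+1=\{0\mid-1\}$. Concretely, in your sum $G+1$ Left moves first and plays $b_j$. This deletes $a_j$, $c_j$ and $s_j$ and leaves only $d_j$ against the extra blue edge: a zero position with Right to move, so Left wins. The lemma therefore genuinely fails with one clause. That is why the paper assumes at least two clauses throughout, and your proposal never uses that hypothesis.

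The false base case also does real work in your argument. You treat Left's $b_j$ or $c_j$ inside the false clause by evaluating the remaining clauses $H'$. When $H'$ is a single false clause (e.g.\ two clauses, both false), you invoke ``$H'<-1$'', which is exactly the false statement.

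The fix is simple, is what the paper does, and needs no induction. Assume at least two clauses. When Left plays $b_j$ or $c_j$ in the false clause, the leftover $d_j$ or the now-isolated $a_j$ cancels the blue edge. Right then answers with $s_l$ for some $l\neq j$; this edge still exists because only $s_j$ was deleted. What remains is the isolated $d_l$ plus an $s$-free remainder worth $*$ by corollary \ref{cor:noSStar}, or $-1$ if clause $l$ is also false. The total is $-1*$ or $-2$, a win for Right. Within your own framework, it would also suffice to show only that Right wins moving first on $H'$. That holds directly by Right playing an $s$ edge, including when $H'$ is $\{-1\mid-2\}$.

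Your other cases are fine: the extra blue edge, $x_iy_k$, $b_k$ and $c_k$ with $k\neq j$, and Right moving first with $s_j$. They use the same $s_j$ replies as the paper.
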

\begin{proof}
Say the cnf is false and there is at least one clause gadget, $i$, unconnected to a variable node. Let $a_j, b_j, c_j, d_j$ be edges in clause $j$ as in figure \ref{fig:clause}, let $x_ky_j$ be a blue edge from variable node $k$ to clause $i$, and let $s_i$ be the red edge from $s$ to clause $i$. We consider the position in the statement of the claim added to $1$ (an isolated blue edge) and show this sum is negative. First consider Right's responses when Left moves first.

\begin{itemize}
    \item[$x_ky_j$] Right responds with $s_i$ to leave $-1*$.
    \item[$b_j$] Right responds with $s_i$ to leave $-2$.
    \item[$b_i$] Right responds with $s_j, i\neq j$ to leave $-2+1$ along with a red edge attached to a position of value $0$, a sum of at most $-1$. Since we assume there are always at least two clauses, $s_j$ exists.
    \item[$c_j$] Right responds with $s_i$, again leaving $-2+1 = -1$ along with a position of value $0$. The same argument holds if Left chooses $c_i$.    
\end{itemize}
If Left plays the isolated blue edge, then Right responds with $s_i$ to leave $-2$. So Left loses moving first.

If Right moves first, they can take $s_i$ to a sum of $-1$.

Since Right wins no matter who goes first, the value of the position itself is strictly less than $-1$.
\end{proof}

\begin{figure}[h!]
\begin{center} 
\begin{tikzpicture}[scale=1]    
    \node[draw, circle] (x1) at (0,6) {$x_1$};
    
    \node[draw, circle] (r1) at (0,7) {};
    \node[draw, circle] (s1) at (0,8) {};    
    \node[draw, circle] (t1) at (-1,9) {};    
    \node[draw, circle] (u1) at (1,9) {};    
    \node[draw, circle] (v1) at (-1,10) {};
    \node[draw, circle] (w1) at (1,10) {}; 
    \node (a2) at (2,5) {};
    \node (a3) at (3,5) {};
    \node (c2) at (3,3) {};
    \node (c3) at (4,3) {};
    \node[draw, circle] (red1) at (-1,11) {};
    \node[draw, circle] (red2) at (1,11) {};
    
    \node[draw, circle] (c1) at (2,3) {};
    
    \node[draw, circle, above left=of c1] (a1) {};
    \node[draw, circle, left=of a1] (b1) {};
    \node[draw, circle, left=of c1] (d1) {};
    \node[draw, circle, left=of d1] (e1) {};

    \node[draw, circle] (s) at (2,1) {$s$};

    \path[-]
    (c1) edge [color=blue] node[above] {$b$} (a1)
	(c1) edge [color=blue] node[below] {$c$} (d1)
	(a1) edge [color=red] node[above] {$a$} (b1)
	(d1) edge [color=red] node[below] {$d$} (e1)
	
	(s) edge [color=red] node[left] {$s_i$} (c1)
	
	(x1) edge [color=blue] node[left] {$e$} (a1)			
	
	(x1) edge [color=red] node[left] {F} (r1) 
	(r1) edge [color=blue] node[left] {T} (s1)
	(s1) edge [color=red] (t1)
	(s1) edge [color=red] node[left] {$g$} (u1)	
	(t1) edge [color=blue] node[left] {$f$} (v1)
	(u1) edge [color=blue] (w1)

    (x1) edge [color=blue] (a2)
    (x1) edge [color=blue] (a3)
    (s) edge [color=red] (c2)
    (s) edge [color=red] (c3)

    (red1) edge [color=red] (red2)
			
    ;
\end{tikzpicture} 
\end{center}
\caption{Part of a reduced cnf position showing one clause gadget and one variable gadget}
\label{fig:var_temp}
\end{figure}

We will show that if Right is able to disconnect all variables from one clause, then those two points is enough to cause them to win.  In order to do that, we first need to show what happens if none of them are disconnected. First, we will examine positions without $s$.

\begin{lemma}
Consider a graph that is the reduction of a positive cnf. If any variable gadgets remain then the best move for both players is to choose a T or F edge, as appropriate. \label{lem:vars_first}
\end{lemma}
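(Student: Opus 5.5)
The plan is a temperature argument. Each variable gadget is a switch of temperature $1$, whereas the clause part has small temperature. First I will bound the clause part. By Theorem~\ref{thm:allPositive}, Lemma~\ref{lem:noS}, Corollary~\ref{cor:noSStar} and the small values computed in Section~\ref{sec:values}, every position reachable in the clause part has value lying among $0$, $*$, $-1$, $-1*$, $-\sfrac{1}{2}\pm\sfrac{1}{2}$, $-\sfrac{1}{2}\pm\sfrac{1}{2}*$ and $-2$, plus things bounded by Lemma~\ref{lem:lessThanNegOne}. I expect these all to have temperature at most $\sfrac{1}{2}$.

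Next I compute each unclaimed variable gadget in isolation. The paper already observes that Left playing T gives $1$ and Right playing F gives $-1$. So the gadget, including its isolated red edge, is $\pm 1$ (up to infinitesimals from the $-A_2$ part, which I would check directly). It therefore has temperature $1$.

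With these values, I would make the comparison explicit rather than citing the general Temperature Theory. Let $V$ be the sum of the remaining variable gadgets and $C$ the clause part, and compare a move on T or F against a move elsewhere. Suppose Left plays somewhere in $C$, or on an edge $x_iy_j$, while some variable remains. Right answers with F on a remaining variable, which gains $2$ relative to Left's gain of at most $1$ from any T. I would show that this leaves Left worse off than the T move would have: Left's move in $C$ improves Left's position by at most $\sfrac{1}{2}$, whereas Right's F reply swings the switch by $1$ in Right's favor relative to Left taking it. The symmetric argument applies to Right, who would otherwise be answered by a T. Formally, for each non-switch option $G^{L'}$ I would exhibit that $G^{L'} \le G^{L}$, where $G^L$ is the corresponding T option. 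I would prove this by playing $G^{L} - G^{L'}$ and giving Left a strategy that mirrors moves and uses the switch surplus. The Death Star Lemma~\ref{lem:deathStar} controls the loss when Left's move cuts $x_iy_j$ edges or when an F edge sits at a vertex with blue edges, since it gives $G' \ge G-1$.

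The main obstacle is the moves that interact between gadgets. A Left move on $x_iy_j$ destroys the F edge of $x_i$ and so partially claims that variable; in a sense it plays T while also deleting clause edges. Likewise Right's $s_j$ deletes every $s_l$ and changes the clause part globally. For the first, I would argue that $x_iy_j$ is dominated by T on $x_i$. After T, Right's F on $x_i$ is gone, the blue edges at $x_i$ remain as threats, and the $-A_2$ gains value $2$; after $x_iy_j$, the $-A_2$ gadget stays hot for Right. For the second, I would use Lemma~\ref{lem:lessThanNegOne} and Theorem~\ref{thm:allPositive} to show that taking $s_j$ early gains Right at most the difference between $-1$ and $*$-type values, which is less than the $1$ Left gains by then taking a T. The odd number of variables only matters later, for deciding who claims the last switch, so here I just need the dominance inequalities, checked case by case.
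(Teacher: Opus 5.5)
\textbf{There is a genuine gap.} Your plan treats the position as a disjunctive sum $V+C$ of $\pm 1$ switches and a cool clause part, but the graph does not split that way. Each F edge shares the vertex $x_i$ with the clause edges $x_iy_j$, and all $s_j$ share $s$. A move on F deletes $x_i$ and every $x_iy_j$, so variable moves rewrite $C$ itself. That rewriting, i.e.\ which clauses stay connected, is exactly what decides the game through $H=-1$ versus $H<-1$. The isolated value $\pm1$ of a gadget only records the tempo incentive.

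The values you want to use for $C$ (Theorem~\ref{thm:allPositive}, Lemma~\ref{lem:noS}, Corollary~\ref{cor:noSStar}) hold only once every variable is claimed, and Lemma~\ref{lem:lessThanNegOne} is only an upper bound. So there is no temperature bound for $C$ while variables remain. Claims like ``Left's move in $C$ improves Left's position by at most $\frac12$'' or ``$s_j$ gains Right at most the difference between $-1$ and $*$'' are therefore unsupported. You rightly decline to rely on hottest-first, but then the entire proof is the deferred dominance checks. For those you give no strategies, and for a clause move you do not even say which T is `corresponding'.

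The cross-gadget cases are also more global than you suggest. The move $x_iy_j$ deletes $y_j$, hence $a_j$. This permanently disarms clause $j$'s $-2$ threat via $s_j$, while also cutting $x_i$ from all its other clauses. So whether it is worse than T depends on the cnf and on parity, not on a local comparison.

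The missing mechanism is the paper's. It uses a minimal-counterexample induction on unclaimed variables and clauses, and meets each deviation with a concrete reply: F against $f$, $b$, $c$; $d_j$ against $x_iy_j$; $c$ against $a$; $b$ against $d$. The inductive hypothesis then forces the remaining play onto switches. The result is read off from the endgame sum of $H$ with the switch remnants ($1$, $-1$ or $0$, according to parity and who started).

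The two Right deviations not settled this way need exact comparisons with F:
\begin{itemize}
\item For $g$, the gadget alone is worth $-1$ after both $g$ and F, so no temperature consideration separates them. It is the Death Star Lemma~\ref{lem:deathStar}, applied to the F edge left pendant at $x_i$ with its blue edges into the clauses, that shows $g$ is no better for Right. This, not Left cutting $x_iy_j$ edges, is where that lemma enters.
\item For $s_i$, one shows $G\ge G'$ (positions after $s_i$ and after F) by mirroring in $G-G'$. This needs explicit answers to the three unmatched red edges: F by $s_i'$, $b'$ by $e$, and $c'$ by T.
\end{itemize}
Your sketch names mirroring and the Death Star Lemma but supplies neither these replies nor the endgame accounting that makes them work.
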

\begin{proof}
We proceed by the method of disproving the existence of a minimal counter-example, first on the number of unclaimed variables, and then by the number of clauses remaining. For the base case, consider the graph with one variable gadget and one clause gadget in figure \ref{fig:var_temp}. An exhaustive test of all edges shows that T and F are optimal moves for both players.

To go further, we first need to spell out four cases describing what happens immediately after both players follow the pattern of playing first on T and F edges.

\begin{itemize}
    \item If there are an odd number of variables and Left goes first, then after all T and F edges are gone, the remnants of those variable gadgets will have value 1 and it will be Right's turn to play.
    \item If there are an odd number of variables and Right goes first, then afterwards the remnants of those variable gadgets will have value $-1$ and it will be Left's turn.
    \item If there are an even number of variables and Left goes first, then afterwards the remnants of those variable gadgets will have value 0 and it will be Left's turn.
    \item If there are an event number of variables and Right goes first, then afterwards the remnants will have value 0 and it will be Right's turn.
\end{itemize}

Now assume the claim is false and consider a minimal counter-example with regard to the number of clause gadgets and unclaimed variables. For every possible option we demonstrate that the outcome is no better than T or F for the player that chooses it. Consider figure \ref{fig:var_temp} as a subgraph of this minimal counter-example. First, consider Left's options. Throughout, we denote by $H$ the position that consists of all clause gadgets and $s$ once all variables have been claimed. By lemma \ref{lem:lessThanNegOne} the value of $H$ is less than $-1$ if it represents a false cnf, and $-1$ if it represents a true cnf.

\begin{itemize}
    \item[$f$]: Right responds with F leaving a smaller game subject to the inductive hypothesis (wherein optimal play is in the unclaimed variable gadgets), along with an extra $-1*$. When all variables have been claimed then what remains, if there are an even number of remaining variable gadgets, is $-1*$ along with $-1$ (if all clauses are true) or strictly less than $-1$ (if at least one is false by lemma \ref{lem:lessThanNegOne}) totaling at most $-2*$. If instead an odd number of variable gadgets remain then, after they are played, what remains is $-1* + 1$ along with a position with value at most $-1$. This sums to $-1*$. 

    \item[$e$]: Right responds with $d_j$ to leave $\{1|0\}$ and a smaller position subject to the inductive hypothesis. After all variables are claimed, then what remains is $\{1|0\}$ along with $H$, totaling $\{0|-1\}$, if the cnf is true and less if the cnf is false (when an even number of variable gadgets remain) or $\{1|0\} + 1$ along with $H$ (when an odd number remain). In the former case, it's Left's turn and they can win if and only if the cnf is true. This can only occur if Left would otherwise have created a true cnf by playing T instead of $e$, granting Left no advantage by playing $e$ over $T$. Additionally, the play on edge $e$ also treats this variable as false for all other connected clauses, so Left has a chance to lose playing $e$ whereby they may have won playing $T$. In the latter case, it's Right's turn and the sum is at most $\{1|-1\}$, which Right wins. So Left has no motivation to play in $e$ instead of $T$.

    \item[$b$]: Right responds with F to leave $-2$ and a smaller position subject to the inductive hypothesis. If there are an even number of remaining variables then Left moves on $H - 2$ and loses. If there are an odd number remaining then Right moves on $H - 1$ and wins. In either case, Left loses playing this edge.

    \item[$c$]: Again, Right responds with F to $-1$ and a smaller position with an extra red edge possibly attached (labeled $a$ in the figure). This is at least as good for Right as if the additional red edge was not present. If there are an even number of variables remaining, then Left plays on $H-1$ and loses. If there are an odd number, then Right plays on $H$ and wins. 
\end{itemize}

Next we consider Right's possible moves.

\begin{itemize}
    \item[$g$]: By lemma \ref{lem:deathStar} the position that results from Right playing edge $g$ is greater than or equal to one less than the position that results from Right playing edge F instead. Since a play on $g$ leaves a value of $0$ in the variable gadget instead of the $-1$ that results from playing on F, Right does not benefit from this move.

    \item[$a$]: Left responds with $c$ to a position with fewer clauses, and no longer needs to consider this clause ending up False. Thus the outcome is at least as good for Left as otherwise. 

    \item[$d$]: Left responds with $b$ and the outcome is as in the previous case.

    \item[$s_i$]: Denote the position resulting from this move by $G$ and the position resulting from Right playing F instead by $G'$. Similarly, the red edge $a$ in $G$ has a blue analogue in $-G'$ denoted $a'$, etc. We demonstrate that in $G - G'$ Right loses moving first, and thus $G \geq G'$. To do so, we consider a Left response to every Right move. The obvious strategy is to mirror Right's $G$ moves in $-G'$ and vice versa. There are only three red edges in this sum that do not have blue analogues. 

    In response to Right playing F, Left simply plays edge $s_i'$ leaving a sum of $0$ for Right. In response to Right playing $b'$, Left plays edge $e$ to leave $1 + \{1|0\} > 0$ ($\{1|0\}$ from the variable gadget in $G$, $1$ from the disconnected variable gadget in $G'$, $-1$ from $d$, and its partner $1$ from $d'$). Finally, if Right plays $c'$, then Left responds with T in $G$. What remains is $1$ from the variable gadget in $G$, $-1$ from $d$, $1$ from the variable gadget in $G'$, and the remainder of $G$ and $G'$, each of which contains an analogous blue edge for every red edge in its mirror, along with additional blue edges that do not have red analogues. Thus, the sum is at least $1$ and a win for Left.

    Since Right loses moving first in $G - G'$, we know that $G$ is no better for Right than $G'$. Hence, Right has no motivation to play edge $s_i$ over playing edge F.        
\end{itemize}

Therefore, no option is better for either player than T or F, as appropriate, and the minimal counter-example does not exist. Hence, players are always motivated to play on variable gadgets until they are all claimed. 
\end{proof}

\begin{theorem}[Main]
For any \poscnf{} position, $G$, Left wins going first on the \parck{} position, $H$, that results from the reduction described by variable and clause gadgets exactly when True wins going first on $G$.
\label{thm:main}
\end{theorem}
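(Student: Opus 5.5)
By Lemma \ref{lem:vars_first}, we may assume that both players spend their opening moves on the T and F edges of the variable gadgets. A Left move on T corresponds to setting $x_i$ true, and a Right move on F to setting $x_i$ false. While variables remain, neither player gains by deviating into the clause gadgets or by playing on $s_j$. So the game on $H$ simulates the \poscnf{} game on $G$ move for move: Left plays the role of True and Right plays the role of False, alternating in the same order. The plan is to follow this phase until every variable is claimed, and then to evaluate the remaining position exactly with the results already proved.

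Because we arranged an odd number of variables, the first case list in the proof of Lemma \ref{lem:vars_first} applies. Suppose Left moves first. After all variables are claimed, the variable remnants have total value $1$: each T move leaves $1$ and each F move leaves $-1$, and with an odd number of gadgets and alternating turns Left has made exactly one more claim than Right. At that point it is Right's turn.

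Let $H'$ denote what remains: the clause gadgets, the vertex $s$, and the surviving $x_iy_j$ edges. A T move on $x_i$ deletes the vertex $x_i$ and hence all its $x_iy_j$ edges. An F move also deletes $x_i$ and its $x_iy_j$ edges. The two kinds of move therefore affect the clause side identically, and the only difference between them is the value they leave in the variable gadget. I will need to reconcile this carefully with the intended semantics, since the statements of Theorem \ref{thm:allPositive} and Lemma \ref{lem:lessThanNegOne} speak of clauses being ``connected'' versus ``unconnected''. I will read the figures so that a clause stays connected exactly when some variable in it was set true, for instance because only Right's F move deletes $x_i$ from the clause side. Verifying this correspondence carefully is the first step.

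With that correspondence in place, the argument splits on the outcome of $G$. If $G$ is won by True, every clause has a remaining $x_iy_j$ edge, so by Theorem \ref{thm:allPositive} we get $H' = -1$; we are assuming at least two clauses. The total is then $1 + (-1) = 0$ with Right to move, so Right loses and Left wins. If $G$ is won by False, some clause is unconnected, so by Lemma \ref{lem:lessThanNegOne} we get $H' < -1$, and the total $1 + H' < 0$ is a win for Right whoever moves next. This gives both directions, provided optimal play in $H$ indeed follows the variable-first pattern. That proviso is exactly Lemma \ref{lem:vars_first}: it says any deviation is no better for the deviator, so each player's best result in $H$ equals their best result in the simulated formula game.

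The main obstacle is the interface between the two phases. Lemma \ref{lem:vars_first} is stated as ``best move'' rather than as an inequality of game values, so I must make sure that a player whose formula game is lost cannot rescue it by deviating. I would argue this by strategy stealing in both directions. The winning player in $G$ plays their formula strategy on T/F edges. Whenever the opponent deviates, the winning player answers with the specific responses exhibited in the proof of Lemma \ref{lem:vars_first} (F answering $f$, $b$, $c$; $d_j$ answering $e$; $c$ answering $a$; and so on). Those responses keep the final comparison with $0$ unchanged or improve it for the winner.
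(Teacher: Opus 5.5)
Your argument follows the paper's proof. Lemma \ref{lem:vars_first} restricts play to the variable phase, the odd number of gadgets leaves remnants of value $1$ with Right to move, and Theorem \ref{thm:allPositive} makes the total $0$ when every clause stays connected.

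You do misread the variable gadget, though. F is the red edge from $x_i$ to the middle vertex, and T is the blue edge from the middle vertex to the center $m$ of the $-A_2$ piece. A T move therefore deletes the middle vertex and $m$, taking F and the two red edges at $m$ with it. It leaves $x_i$ and all of its $x_iy_j$ edges intact. An F move deletes $x_i$ together with every $x_iy_j$ edge. So your claim that the two moves ``affect the clause side identically'' is false. The correspondence you postpone as a ``first step'' can be read directly from figure \ref{fig:variable}: a clause keeps an $x_iy_j$ edge exactly when some variable in it was claimed with T. This is the paper's remark that only F-claimed variables lose their $xy$ edges.

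In the false case you cite Lemma \ref{lem:lessThanNegOne} to get $1+H'<0$. The paper instead uses the fact that Right is to move and exhibits the winning move. Right takes $s_j$ on an unconnected clause $j$, which isolates $a_j$ and $d_j$ (value $-2$). Lemma \ref{lem:noS} puts the rest at $0$, so Left faces a total of $-1$. Both routes work. Yours invokes a stronger statement than needed, namely that Right wins even moving second. The paper's needs only Lemma \ref{lem:noS}.

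Your final paragraph goes beyond the paper, which simply invokes Lemma \ref{lem:vars_first}. If you keep it, note that that lemma's proof does not answer the deviations $g$ and $s_i$ with scripted replies. It shows instead that the resulting position is at least as good for Left as the one after F: Lemma \ref{lem:deathStar} handles $g$, and the mirror argument handles $s_i$. So for those two cases you must argue that if $P\ge P'$ and Left wins $P'$ moving first, then Left wins $P$ moving first. Also, what you describe is a direct response strategy, not strategy stealing.
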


\begin{proof}
    
Players will choose to play on the variable gadgets before any other parts of the graph due to lemma \ref{lem:vars_first}.  As mentioned in lemma \ref{lem:vars_first}, we assume there are an odd number of these gadgets. This will leave a total value of $1$ across all of them because one more will have been played by Left than by Right.
    
After all variables are played, those variables where Right played on the F edge will have lost their $xy$ edges to the clauses.  If any clause is missing all of these edges, say clause $j$, then that corresponds to False winning the \poscnf{} position.  Right can choose edge $s_j$ to leave that extra value of -2 in that clause and the rest at $0$ by lemma \ref{lem:noS}.  Right will win after this move because the total value of the position will be -1.  

If all clauses are still attached to at least one variable, then it corresponds exactly to True winning the \poscnf{} position.  By theorem \ref{thm:allPositive}, and the fact that we assume the cnf contains more than one clause, the value of the connected clauses is -1.  This leaves a total value of $0$ and Left has gone last, so they have a winning strategy.

\end{proof}

\begin{corollary}
    \parck{} is \cclass{PSPACE}-complete.
\end{corollary}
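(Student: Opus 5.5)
To prove that \parck{} is \cclass{PSPACE}-complete, I would establish membership and hardness separately.

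\textbf{Membership in \cclass{PSPACE}.} Every move in \parck{} deletes at least two vertices, so any game on a graph with $n$ vertices lasts at most $n/2$ turns. I would therefore evaluate the game tree by depth-first minimax search. This search only ever stores the current path of positions, each of which fits in polynomial space, so the total space used is polynomial.

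\textbf{Hardness.} I would reduce from \poscnf{}, which is known to be \cclass{PSPACE}-complete and is the problem used in the reduction above. The reduction itself, given an instance of \poscnf{}, proceeds in three steps:
\begin{itemize}
    \item If the number of variables is even, add one dummy variable that appears in no clause.
    \item If there is only one clause, duplicate it.
    \item Build the graph: one variable gadget as in figure \ref{fig:variable} for each variable, one clause gadget as in figure \ref{fig:clause} for each clause, the shared vertex $s$, and a blue edge $x_iy_j$ for each occurrence of variable $x_i$ in clause $j$.
\end{itemize}
Each gadget has constant size, apart from the $x_iy_j$ edges, which are one per literal occurrence. The graph therefore has size linear in the formula and is clearly computable in polynomial time. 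The construction uses only red and blue edges, matching the claim in the abstract.

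\textbf{Correctness.} Theorem \ref{thm:main} states that Left wins going first on the constructed position exactly when True wins going first on the \poscnf{} instance. Neither padding step changes that instance's outcome: the dummy variable appears in no clause, and a duplicated clause is satisfied exactly when the original is. This completes the reduction.

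The only step needing care is matching the turn convention, namely that True moves first in the \poscnf{} instance exactly as Left moves first in \parck{}. Theorem \ref{thm:main} already handles this, so I expect no real obstacle: the corollary follows by combining that theorem with the two observations above.
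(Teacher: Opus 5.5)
Your proposal is correct and follows the same route as the paper: hardness comes from \poscnf{} via Theorem~\ref{thm:main}, and membership comes from depth-first evaluation of a polynomial-depth game tree (you bound the depth by $n/2$ because each move deletes two vertices, while the paper bounds it by the number of edges $m$). Your extra checks on the size of the construction and on the padding are fine. One caveat: the dummy variable is harmless not merely because it lies in no clause, but because it acts as a pass, and in a positive formula claiming an extra variable never hurts the player who claims it.
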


\begin{proof}
By \cite{DBLP:journals/jcss/Schaefer78}, \poscnf{} is \cclass{PSPACE}-complete.  Thus, \parck{} is \cclass{PSPACE}-hard due to the reduction from \poscnf{} in  theorem \ref{thm:main} .  It is also in \cclass{PSPACE} because the game tree has a bounded depth of the number of edges, $m$, and each position has at most $m$ options.
\end{proof}

\section{Conclusions and Future Work}

In this paper we showed that \parck{} is \cclass{PSPACE}-complete, even on positions without any green edges, improving on the \cclass{NP}-hardness result of \cite{HANAKA2026103716}.  \parck{} is closely related to two complexity questions that have long evaded ACGT researchers.

\begin{problem}
    What is the computational complexity of \ruleset{Domineering}?
\end{problem}

\begin{problem}
    What is the computational complexity of (impartial) \ruleset{Arc Kayles}?
\end{problem}

\ruleset{Arc Kayles} is the same as \parck{}, except that all edges are green.  

Plenty of effort has been spent on these two rulesets in the past few decades, but neither has yet been solved.

\bibliographystyle{plainurl}

\end{document}